\newcommand{\putawayall}[1]{}
\newcommand{\putaway}[1]{}
\newcommand{\ijcaiputaway}[1]{}
\newcommand{\state}{S}
\newcommand{\stateval}{V}
\newcommand{\classbelbase}{\mathbf{M} }
\newcommand{\setbelbase}{\mathbf{S} }
\newcommand{\impbel}[2] {\square_{#1}^{#2}  }
\newcommand{\impbelposs}[2] {\lozenge_{#1}^{#2}  }
\newcommand{\belbaseset}{\mathcal{B}}
\newcommand{\iconstraint}{\mathit{U}}
\newcommand{\relstate}[1]{\mathcal{R}_{#1}}
\renewcommand{\phi}{\varphi}
\newcommand{\defin}{~\stackrel{\mbox{\scriptsize def}}{=}~} 
\newcommand{\imp}{\rightarrow} 
\newcommand{\eqv}{\leftrightarrow} 
\renewcommand{\phi}{\varphi}
\newcommand{\PROP}{\mathit{Atm}}
\newcommand{\AGT}{\mathit{Agt}}
\newcommand{\bnf}{::=}
\newcommand{\powerset}{ 2^}
\newcommand{\PAGT}{\powerset{\mathit{Agt}*}}
\newcommand{\lang}{ \mathcal{L}}
\newcommand{\fraglang}{ \mathcal{L}_{0} }
\newbox\itembox
\def\itemlistlabel#1{#1\hfill}
\def\itemlist#1{\setbox\itembox=\hbox{#1}%
                \list{}{\labelwidth\wd\itembox
                             \leftmargin\labelwidth
                             \advance\leftmargin by\itemindent
                             \advance\leftmargin by\labelsep
                             \let\makelabel\itemlistlabel}}
\newcommand{\algofunction}{\textbf{function }}
\newcommand{\algoprocedure}{\textbf{procedure }}
\newcommand{\algoendfunction}{\textbf{endFunction }}
\newcommand{\algofor}{\textbf{for }}
\newcommand{\algodo}{\textbf{do }}
\definecolor{algocommentbackgroundcolor}{rgb}{1,1,0.5}
\newcommand{\algowhile}{\textbf{while }}
\newcommand{\algoif}{\textbf{if }}
\newcommand{\algothen}{\textbf{then }}
\newcommand{\algoendif}{\textbf{endIf }}
\newcommand{\algomatch}{\textbf{match }}
\newcommand{\algocase}{\textbf{case }}
\newlength{\algoindentlongueur}
\newcommand{\algoindent}{\hspace*{\algoindentlongueur}}
\newlength{\algoindentavantvrulelongueur}
\newcommand{\algoindentavantvrule}{\hspace*{\algoindentavantvrulelongueur}}
\newlength{\dummy}
\newsavebox{\frameminipageboiteavecunnomsuperlongdesortequonnepuissepaslereutiliser}
\newenvironment{frameminipage}[2][c]{%
\begin{lrbox}{\frameminipageboiteavecunnomsuperlongdesortequonnepuissepaslereutiliser}%
\begin{minipage}[#1]{#2}%
} {%
\end{minipage}%
\end{lrbox}%
\framebox{\usebox{\frameminipageboiteavecunnomsuperlongdesortequonnepuissepaslereutiliser}}%
}
\newenvironment{algobloc}{\setlength{\dummy}{\linewidth}\addtolength{\dummy}{- \algoindentlongueur}\addtolength{\dummy}{- \algoindentavantvrulelongueur}\algoindentavantvrule\vrule\algoindent\begin{minipage}{\dummy}}{\end{minipage}}
\newtheorem{example}{Example}
\newtheorem{definition}{Definition}
\newtheorem{lemma}{Lemma}
\newtheorem{theorem}{Theorem}
\newtheorem{remark}{Remark}
\newcommand{\refeq}[1]{(\ref{#1})}
\newcommand{\definitionword}{Def.}
\newcommand{\lemmaword}{Lem.}
\newcommand{\eqdef}{\stackrel{def}{=}}
\newcommand{\Props}{\mathbb{P}}
\newcommand{\prop}{p}
\newcommand{\ltri}[2]{\triangle_{#1}^{#2}}
\newcommand{\lbox}[2]{\Box_{#1}^{#2}}
\newcommand{\Lang}{\mathcal{L}}
\newcommand{\LangE}{\mathcal{L}_0}
\newcommand{\LogicName}{\textbf{LGDDA}\xspace}
\newcommand{\LDA}{\textbf{LDA}\xspace}
\newcommand{\CalculusName}{\textbf{Tab$_\LogicName$}\xspace}
\newcommand{\LKName}{\textbf{LK}\xspace}
\newcommand{\tuple}[1]{( #1 )}
\newcommand{\Nat}{\mathbb{N}}
\newcommand{\plusinfty}{{\omega}}
\newcommand{\NatZero}{\Nat_0}
\newcommand{\NatOne}{\Nat_1}
\newcommand{\NatZeroInf}{\Nat_0^\plusinfty}
\newcommand{\NatOneInf}{\Nat_1^\plusinfty}
\newcommand{\supp}[1]{Supp(#1)}
\newcommand{\maxhat}{\max^{\ast}}
\newcommand{\minhat}{\min^{\ast}}
\newcommand{\partitions}[2]{\mathcal{P}art(#2,#1)}
\newcommand{\Worlds}{W}
\newcommand{\dist}{\rho}
\newcommand{\Dox}{\mathcal{D}}
\newcommand{\Val}{\mathcal{V}}
\newcommand{\Agents}{\textit{Agt}}
\newcommand{\agent}{i}
\newcommand{\PowSet}[1]{2^{#1}}
\newcommand{\PowSetNE}[1]{2^{#1 \ast}}
\newcommand{\groups}{\PowSetNE{\Agents}}
\newcommand{\group}{J}
\newcommand{\MultiSet}[1]{\mathcal{M}(#1)}
\newcommand{\MultiSetE}{\mathcal{M}(\LangE)}
\newcommand{\summandf}{\delta}
\newcommand{\summand}[1]{\summandf(#1)}
\newcommand{\atomMap}{\chi}
\newcommand{\equivFilter}{\equiv_{\varphi}}
\newcommand{\completeSet}{\Phi}
\newcommand{\boxDown}[1]{#1^{\Box\downarrow}}
\newcommand{\ltriDown}[2]{#2^{\Delta_{#1}\downarrow}}
\newcommand{\sumbelief}[2]{\Dox_{agg}(#1, #2)}
\algnewcommand\BoolAnd{\textbf{and\;}}
\algnewcommand\BoolNot{\textbf{not\;}}
\algnewcommand\Continue{\textbf{continue}}
\newcommand{\rulename}[1]{\mbox{\textit{(#1)}}\xspace}
\newcommand{\ltrirule}{\rulename{$\ltri{}{}$-Mon}}
\newcommand{\lboxrule}{\rulename{$\lbox{}{}$-Elim}}
\title{Graded Distributed Belief}
\author{Emiliano Lorini
\institute{IRIT, CNRS, Toulouse University\\ Toulouse, France}
\email{Emiliano.Lorini@irit.fr}
\and
Dmitry Rozplokhas
\institute{TU Wien\\
Vienna, Austria}
\email{dmitry@logic.at}
}
\begin{document}
\maketitle

\begin{abstract}
We introduce a new logic of graded distributed belief that allows us to express the fact that a group of agents distributively believe that a certain fact $\varphi$ holds with at least strength $k$. We interpret our logic by means of computationally grounded semantics relying on the concept of belief base. The strength of the group's distributed belief is directly computed from the group's belief base after having merged its members' individual belief bases. We illustrate our logic with an intuitive example, formalizing the notion of epistemic disagreement. We also provide a sound and complete Hilbert-style axiomatization, decidability result obtained via filtration, and a tableaux-based decision procedure that allows us to state PSPACE-completeness for our logic.
\end{abstract}

\section{Introduction}

The idea of using belief bases as formal semantics for multi-agent epistemic logic was first introduced in 
\cite{InPraiseOfBeliefBases}
 and further  developed in \cite{DBLP:journals/corr/abs-1907-09114,BeliefBasesAIJ}. This approach aligns with the sentential (or syntactic) perspective on knowledge representation \cite{KONOLIGE,HanssonJSL,ShohamJPL,JonghLiu2009}, which holds that an agent's body of knowledge
should be represented as a set of sentences in a formal language.
The key novelty of belief  base semantics, compared to traditional epistemic logic semantics based on multi-relational Kripke models \cite{MeyerEpist,DBLP:journals/ai/HalpernM92}, lies in two main aspects. First,  a possible world (or state) 
in a model 
is not treated as a primitive entity but is instead composed of the agents' belief  bases and a valuation of propositional atoms. Second, the agents' accessibility relations are not explicitly part of the model but are determined \emph{a posteriori} from their belief  bases. Specifically, in this semantics, an agent at state 
$S$ considers state 
$S'$
  possible (or epistemically accessible) if and only if 
$S'$
  satisfies all the formulas in the agent’s belief base at 
$S$.
This decomposition of a state into more fundamental elements is shared by various approaches in symbolic model checking and computationally grounded semantics for epistemic logic. These include frameworks based on interpreted systems, where a global state is decomposed into individual agents' local states \cite{Fagin1995,LomuscioRaimondi2015}, as well as those that rely on the primitive notion of observability or visibility \cite{DBLP:conf/aamas/HoekIW12,DBLP:conf/kr/CharrierHLMS16,DBLP:conf/lori/HerzigLM15,BenthemEijckGattingerSu2015}.
At the language level, the belief  base approach distinguishes explicit (or actual) belief from implicit (or potential) belief. The distinction between explicit and implicit belief has been widely discussed in the literature \cite{LevesqueExplicitBel,Fag87}. In the belief  base approach, this distinction is based on the concept of \emph{deducibility}: 
explicit beliefs are those directly stored in an agent's belief base, while implicit beliefs consist of any information that can be logically inferred from those explicit beliefs.

In two subsequent works, the belief base approach has been shown to successfully represent notions of distributed and common belief \cite{DBLP:conf/ecai/HerzigLPRS20,DBLP:conf/atal/LoriniR22}, as well as graded belief \cite{DBLP:conf/jelia/LoriniS21}.
On the one hand, belief base semantics allows for a natural distinction between explicit and implicit distributed belief. While the explicit distributed belief of a group is given by the merging of the belief bases of its members, the implicit distributed belief corresponds to what can be deduced from the (collective) belief base resulting from this merging.
On the other hand, the approach allows us to define a natural notion of the degree (or strength) of an agent's implicit belief that $\varphi$, understood as the maximum number of pieces of information that can be removed from the agent’s belief base without preventing the agent from deducing $\varphi$ from their explicit beliefs.

In this paper, we present a generalization of the belief base semantics for epistemic logic we  introduced in \cite{InPraiseOfBeliefBases,BeliefBasesAIJ}. In the original semantics, agents' belief bases were simply sets of formulas built from a language including propositional facts and explicit beliefs: an agent’s belief base could include both information about the world and information about other agents' belief bases.
Our generalization moves from plain (ungraded) belief bases to graded belief bases by using a multiset representation. In a graded belief base, each piece of information is associated with a natural number representing the strength of the agent’s explicit belief, with $ 0$
associated to a formula $\alpha $
meaning that the agent has no explicit belief that $\alpha$. 
We use this more general
semantics
to define a novel  notion
of \emph{graded distributed belief}, 
as a piece of information $\varphi $
that a group can deduce from
their collective belief
base with a given  strength $k$.
Given a graded belief base for each agent in a group, we compute the group’s graded distributed belief in two steps. First, we merge the graded belief bases of the group’s members to obtain a collective graded belief base. The degree that the group assigns to a formula 
$\alpha$
corresponds to the sum of the degrees that each member assigns to 
$\alpha$.
Second, we compute the group's degree of distributed belief in a certain fact $\varphi$ as the amount of information that can be removed from the group's belief base without preventing it from deducing $\varphi$. 
We also show how our framework
can be used to define a quantitative  notion of 
of epistemic disagreement  within a group, based on the amount of information that must be removed from the group’s collective belief base to restore consistency.
This notion of disagreement bears similarities to the measure of inconsistency defined in \cite{HunterKonieczny2008}, namely, the minimal number of formulas that need to be removed from a belief  base to restore consistency.

The paper is organized as follows. In Section \ref{sec:framework}, we first present the general framework: the graded belief base semantics, the modal language for representing implicit graded distributed belief,  its semantic interpretation
and an example illustrating our framework
as well as the notion of epistemic disagreement
within a group that can be defined in our framework. 
Following \cite{BeliefBasesAIJ}, in Section \ref{sec:alternative}, we introduce an alternative Kripke-style semantics for our modal language, which serves as a technical tool for investigating the proof-theoretic aspects of our framework.
Section
\ref{sec:Hilbert}
presents a Hilbert-style axiomatics
for our logic of   graded  distributed belief.
Then, in Section \ref{sec:tableaux} we present a decision procedure
based on tableaux, which allows us to establish PSPACE-completeness for our logic. 

Before turning to the core of the paper, we briefly discuss some related work. 
Although the notion
of graded distributed belief
and the graded belief base semantics used to interpret it, as introduced in this paper, are new, the notion of plain distributed belief has been widely investigated in epistemic logic \cite{DBLP:journals/ai/HalpernM92,DBLP:journals/synthese/WangA13,DBLP:journals/ai/AgotnesW17,RoelofsenJANCL,HLM1999,Lindqvist2022,Balbiani2024,Galimullin2024}. 
Moreover, 
the idea of having 
graded belief
modalities for individual
agents
was explored in previous work \cite{LavernyLang2004},
in 
line with work
in ranking theory \cite{Spohn1988}.
Other approaches employ graded modalities for individual agents, where the degree of belief is determined either by the number of worlds in which the believed formula holds true \cite{HoekMeyerGRADED,HoekThesis,Budzynska2008ALF}, or by the amount of evidence supporting it \cite{DBLP:conf/ijcai/BalbianiFHL19}.
The notion of graded belief
base
is also
used in possibility theory
\cite{Dubois1994}.
In the present paper,
we generalize
it to the multi-agent
setting and to nested beliefs.


\section{Framework}\label{sec:framework}

In this section, we
present our 
graded belief base semantics
and show how to use it  to compute
graded doxastic accessibility
relations for agents and groups. 
Then, we introduce a modal language
of graded distributed belief
and interpret it using the semantics.
We illustrate
our language
and semantics
with the help of a concrete example.

\subsection{Notation}

In this paper we will work with graded sets (or multisets), where grade (or cardinality) of each element is either a natural number (including zero) or infinity (denoted $\plusinfty$).
To avoid confusion we use notation $\NatZero$ and $\NatOne$ for natural numbers with and without zero respectively, and $\NatZeroInf$ and $\NatOneInf$ for their extensions with element $\plusinfty$.  We represent a graded set over set $X$ by a function ${f \colon X \to \NatZeroInf}$, and define the \emph{support} of this graded set as $\supp{f} = {\{ x \in X \mid f(x) > 0 \}} $. We denote the set of all multisets over $X$ by $\MultiSet{X}$.

For $X \subseteq \NatZeroInf$ we will use notation $\minhat X = \min (X \cup \{0\})$ and $\maxhat X = \max (X \cup \{\plusinfty\})$ to avoid dealing with the case of empty set.
We will also consider potentially infinite sums of grades. Since grades are natural numbers (or inifinity) such sums have natural well-behaved definition: we define it as sum of non-zero summands if there are finitely many such summands and none of them is $\plusinfty$, and as $\plusinfty$ otherwise.
Finally, we will use the notion of \emph{partitions}, i.e. functions dividing the given grade $k \in \NatZero$ among agents in group $\group$: $\partitions{k}{\group} = \{ \summandf : \group \to \NatZero \mid \sum_{\agent \in \group} \summandf(i) = k \}$.  

\subsection{Semantics}

We are going to present a 
belief base 
semantics for epistemic
attitudes of agents
that generalizes
the belief
base semantics
introduced in \cite{InPraiseOfBeliefBases,BeliefBasesAIJ}
to multisets.
Multisets are used
to represent agents' explicit  beliefs
with their strengths, weight
or epistemic importance. 
Unlike the standard Kripke semantics
for epistemic
logic
in which the notions
of epistemic alternative 
and plausibility of a world
(or state)
are given as primitive,
in this semantics
they are defined  from the primitive
concept
of graded belief base.

Assume
a countably  infinite set  of atomic propositions $\PROP$
and
 a finite set of agents $\AGT = \{ 1, \ldots, n \}$.
 The set of non-empty groups 
 is denoted by $\PAGT=
 2^\AGT \setminus \{\emptyset \}$.
We define the  language
$\fraglang(\PROP, \AGT)$
for
representing
agents' graded explicit beliefs
by the following  grammar:
\begin{center}\begin{tabular}{lcl}
  $\alpha$  & $\bnf$ & $p  \mid \neg\alpha \mid \alpha \wedge \alpha     \mid
  \ltri{\agent}{k} \alpha,
                        $
\end{tabular}\end{center}
where $p$ ranges over $\PROP$,
$\agent$ ranges over $\AGT$
and $k$
ranges over $\NatOneInf$.
The formula $ \ltri{\agent}{k} \alpha$ is read ``agent $\agent$ explicitly believes that $\alpha$
with at least degree  $k$''.
For notational convenience, we
abbreviate
$ \ltri{\agent}{} \alpha \eqdef
\ltri{\agent}{ 1 } \alpha$. 
The formula 
$ \ltri{\agent}{} \alpha$ is simply read ``agent $\agent$ explicitly believes that $\alpha$''.

For notational convenience we write
$\fraglang$  instead
 of $\fraglang(\PROP, \AGT)$, when the context is unambiguous.

\begin{definition}\label{state}
A state is a tuple $ S = (\belbaseset_1, \ldots, \belbaseset_n,  \stateval )$
where
 for every $\agent \in \AGT$, $\belbaseset_\agent \in \MultiSet{\fraglang}  $
is agent $\agent$'s graded  belief base,
and $ \stateval \subseteq \PROP  $ is the actual environment.
%
The set
of all states is denoted by $\setbelbase$.
\end{definition}


$\belbaseset_\agent$ assings a weight $\belbaseset_i(\alpha)$ to  each formula from $\alpha \in \fraglang$,
capturing the strength of agent $i$'s
explicit 
belief that $\alpha$.

The language
$\fraglang$
is interpreted with respect
to states, as follows.
\begin{definition}\label{truthcond1}
Let $S=(\belbaseset_1, \ldots,  \belbaseset_n,  \stateval ) \in \setbelbase$. Then:
\begin{eqnarray*}
S\models p & \Longleftrightarrow & p \in \stateval ,\\
S\models \neg \alpha & \Longleftrightarrow &    S \not \models  \alpha ,\\
S \models \alpha_1 \wedge \alpha_2 & \Longleftrightarrow &    S \models \alpha_1  \text{ and }     S \models \alpha_2, \\
S \models  \ltri{\agent}{k} \alpha   & \Longleftrightarrow & 
\belbaseset_i(\alpha)\geq k. 
\end{eqnarray*}

\end{definition}
Observe in particular the following interpretation
of the graded explicit belief operator: agent~$\agent$
explicitly believes that $\alpha$
with at least degree $k$
if and only if 
the information $\alpha$
has an importance for the agent 
 at least equal to $k$.

From the agents' graded belief
bases $\belbaseset_1, \ldots, \belbaseset_n$  it
is natural to 
compute
the
collective graded belief base
$\belbaseset_\group \in \MultiSet{\fraglang}$
of a group $\group$:
the 
degree of explicit belief of the
group  is equal to the sum of the degrees of beliefs of the group's members.
\begin{definition}\label{Merging}
Let $ S = (\belbaseset_1, \ldots, \belbaseset_n,  \stateval ) \in \setbelbase$
and $\group \in \PAGT$. 
Then,
$\belbaseset_\group (\alpha)= \sum_{\agent \in \group} \belbaseset_\agent ( \alpha)$ for every $\alpha \in \fraglang$.

\end{definition}

This definition extends the notion of ``pooling'' belief bases 
from~\cite{DBLP:conf/ecai/HerzigLPRS20} to the weighted case, assuming agent independence and epistemic egalitarianism. Note 
that we do not rely on more involved forms of belief base aggregation 
aimed at preserving consistency, since in our framework individual 
belief bases may already be inconsistent.

The following definition introduces  the notion of graded doxastic alternative
for a group.

\begin{definition}\label{DefAlternative}
Let $\group  \in \PAGT$
and let $k \in \NatZero$.
Then,
$\relstate{\group }^{k}$
is the binary relation on the set  $\setbelbase$
such that,
for all
$ S=(\belbaseset_1, \ldots,  \belbaseset_n, \stateval ) ,
  S' = (\belbaseset_1', \ldots,  \belbaseset_n',  \stateval' )  \in \setbelbase$:
\begin{align*}
&S \relstate{\group}^{k} S' \text{ if and only if } \sum_{\substack{\alpha \in \LangE \\ S' \not\models \alpha}} 
\belbaseset_\group  ( \alpha) \leq k.
\end{align*}
\end{definition}

$S \relstate{\group }^{k} S'$ means
that, from the point of view of the group  $\group $ at state $S$, 
 state $S'$ is at most $k$-implausible,
where the degree of implausibility 
of a state for a group 
is equal to the  weighted sum
of the group's explicit
beliefs
that are not satisfied at the state.
This means
that the degree
of implausibility
of a state for a group 
depends
on i) how much information 
that the group 
has in its belief base is not satisfied at the state,
and ii) how important 
is that information
for the group.

Notice that 
$S \relstate{\group }^{k} S'$ can also
be interpreted
as the fact that
state $\state'$
is considered possible
for the group $\group$
after removing from its collective belief
base 
a body of information
of importance at most equal to $k$. 
Indeed, $\sum_{\substack{\alpha \in \LangE \\ S' \not\models \alpha}} 
\belbaseset_\group  ( \alpha)$
can also
be conceived as the total amount 
of importance
for the group $\group$
at state $S$
of the information
that is not satisfied at state $S'$. 

A
graded doxastic accessibility relation $\relstate{\group}^{k} $
induces a plausibility ordering over states,
as in \cite{Spohn1988,LavernyLang2004}.
For notational convenience,
we write $\relstate{\group }^{}$
instead of
$\relstate{\group }^{0}$.
Clearly, 
$S \relstate{\group }^{} S'$
if and only iff
$\forall \alpha \in \fraglang $,
if $ \belbaseset_\group(\alpha)> 0$ then $S' \models \alpha$. 
In words,
a state is $0$-implausible
from the point of view of a group 
if it satisfies all information
in the group's belief base.

Before concluding this section,
we define the concept
of a model
as a
 state supplemented with a set of states,
called \emph{context}.
The latter includes all states 
 compatible with
the the
agents' common ground \cite{Stalnaker2002},
i.e., the body of information that the agents commonly believe to be the case.
\begin{definition}\label{MAGBM}
A multi-agent graded belief model (MAGBM) 
is a pair $ (S ,\iconstraint)$,
where
 $S \in \setbelbase $
and
$\iconstraint \subseteq \setbelbase$.
The class of models  is denoted by $\classbelbase$.
\end{definition}
\subsection{Language}

We consider a modal language 
 $\lang(\PROP, \AGT)$
 that extends  the language $\fraglang(\PROP, \AGT)$
given above
with graded distributed belief
modalities. It is 
 is defined by the following grammar:
\begin{center}\begin{tabular}{lcl}
  $\phi$  & $\bnf$ & $\alpha  \mid \neg\phi \mid \phi \wedge \phi   \mid \impbel{\group  }{k} \varphi ,
                        $\
\end{tabular}\end{center}
where $\alpha$ ranges over $\fraglang(\PROP, \AGT)$, $\group   $ ranges over $\PAGT$
and $k$ ranges over $\NatZero$.
For notational convenience we write
and  $\lang$  instead
of $\lang(\PROP, \AGT)$, when the context is unambiguous.
The other Boolean constructions  $\top$, $\bot$, $\imp$ and $\eqv$ are defined in the standard way.
%
%

We interpret the modal language 
$\lang$
relative to a model
by means
of the graded accessibility relations
 $\relstate{\group}^{k}$.
\begin{definition}\label{truthcond2}
Let $ (S,\iconstraint)  \in \classbelbase$. Then:
\begin{eqnarray*}
 (S,\iconstraint) \models \alpha & \Longleftrightarrow & S \models \alpha, \\
(S,\iconstraint) \models \neg \varphi & \Longleftrightarrow & S \not\models \varphi, \\
(S,\iconstraint) \models \varphi_1 \wedge \varphi_2 & \Longleftrightarrow & S \models \varphi_1 \textit{ and } S \models \varphi_2, \\
(S,\iconstraint) \models \impbel{\group }{k} \varphi & \Longleftrightarrow & \forall S' \in  \iconstraint  , \text{ if } S \relstate{\group }^{k}S' \text{ then }  ( S' , \iconstraint) \models \varphi.
\end{eqnarray*}
\end{definition}
The 
modal 
formula $ \impbel{\group  }{k}   \varphi$ is read
``group  $\group $ would implicitly believe that $\varphi$, 
for every removal
from its  belief base
of a body of information
of importance at most equal to $k$''. The value $k$
can also be conceived as the extent to which group 
$\group $
distributively 
believes that $\varphi$.
Indeed, the higher the 
importance of the information 
that can be removed
from the group's   belief base 
without affecting  what the group can infer,
the stronger the inference and so the group's resulting  distributed belief.
Thus, $ \impbel{\group }{k}   \varphi$ can also be read
``group  $\group $
has an implicit distributed
belief  that $\varphi$ of degree (or strength) at least $k$''.
%
The abbreviation $\impbelposs{\group }{k}   \varphi \defin \neg\impbel{\group }{k} \neg \varphi$ defines the concept
of distributed belief compatibility. 
The formula 
$\impbelposs{\group }{k}   \varphi$
has to be read
``$\varphi$
would be compatible with group  $\group $'s
explicit beliefs,
for some removal 
from its collective belief base
of a body of information
of importance at most equal to $k$''.

\subsection{Conceptual Analysis}

We are going to show how our language and semantics
can be leveraged to formally represent
graded
distributed belief 
as well as  degree
of epistemic disagreement within  a group of agents.
The latter notion is formally defined by the following abbreviation:
$\mathit{Disagree} (\group  ,k ) \eqdef \impbel{\group }{k-1 } \bot $ for $k\ge1$.
$   \mathit{Disagree} (\group  ,k ) $
means that within  the group $\group$
there is an epistemic disagreement
of at least strength $k$.

\begin{example}
Ann, Bob, Cath and John are the four members of a research project evaluation committee. Their task is to decide whether a submitted project for funding can be included in the list of ``fundable''  projects or not. They 
are all convinced with at least strength $k_0>0$
that a 
project should be included in the list ($\mathit{in}$) if and only if its idea is innovative 
($\mathit{id}$) 
and, at the same time, the project's consortium is of high scientific standard ($\mathit{hi}$). 
This hypothesis is captured by the following abbreviation:
\begin{align*}
    \alpha_1 \eqdef 
\bigwedge_{i \in \{ \mathit {Ann},\mathit {Bob},\mathit {Cath},\mathit {John} \} }  \ltri{i}  {k_0 } \big( 
\mathit{in} \leftrightarrow ( \mathit{id} \wedge \mathit{hi} )
\big) .
\end{align*}
However, they have diverging opinions and, in some cases, have not yet formed an opinion regarding these qualities of the project.
In particular, Ann explicitly believes with degree $k_1 > 0$ that the project's idea is innovative, and Cath believes the opposite with degree $k_3 > 0$, Bob explicitly believes with degree $k_2 > 0$ that the project's consortium is of high scientific standard, and John explicitly believes the opposite with degree $k_4 > 0$.
This hypothesis is captured by the following abbreviation:
\begin{align*}
    \alpha_2 \eqdef &
 \ltri{\mathit {Ann } }  { k_1}  \mathit{id} \wedge 
  \ltri{\mathit {Bob } }  { k_2}   \mathit{hi }\wedge
   \ltri{\mathit {Cath } }  { k_3}   \neg \mathit{id } \wedge 
     \ltri{\mathit {John } }  { k_4 } \neg   \mathit{hi }
   .
\end{align*}

It is routine to verify that group $\{ \mathit{Ann}, \mathit{Bob} \}$ implicitly believes with degree $(\min\{2 k_0, k_1, k_2\} - 1)$ that the project should be included in the list, while group $\{ \mathit{Cath}, \mathit{John} \}$ believes the opposite with degree $(\min\{2 k_0, k_3 + k_4\} - 1)$:
\[\models (\alpha_1 \wedge \alpha_2) \;\;\rightarrow\;\;  (\impbel{\{ 
     \mathit{Ann  } , 
    \mathit{Bob  }   \} }{ \min\{2 k_0, k_1, k_2\} - 1 } \mathit{in } \;\;\;\wedge\;\;\; \impbel{\{ 
     \mathit{Cath  } , 
    \mathit{John   }   \} }{\min\{2 k_0, k_3 + k_4\} - 1  } \neg \mathit{in }).\]

Moreover, when the explicit information is restricted to $\alpha_1 \wedge \alpha_2$ there is no disagreement within  these groups (i.e. there exist models where $\alpha_1 \wedge \alpha_2$ is true and $\mathit{Disagree}(\group,1)$ is false), while all four agents together have an epistemic disagreement of at least strength $(\min\{k_1,k_3\} + \min\{k_2,k_4\})$:
\[ \begin{array}{l}
\not\models (\alpha_1 \wedge \alpha_2)  \;\;\rightarrow\;\; (\mathit{Disagree} (\{\mathit{Ann},\mathit{Bob} \}  ,  1 ) \;\vee\; \mathit{Disagree} (\{\mathit{Cath},\mathit{John} \}  ,  1 )) \\
\models (\alpha_1 \wedge \alpha_2)  \;\;\rightarrow\;\; \mathit{Disagree} (\{\mathit{Ann},\mathit{Bob}, \mathit{Cath}, \mathit{John} \}  ,  \min\{k_1,k_3\} + \min\{k_2,k_4\}  ).
\end{array} \]


\end{example}

\section{Alternative  semantics}\label{sec:alternative}

To explore the proposed logic we will follow the general approach of~\cite{InPraiseOfBeliefBases} and introduce two alternative equivalent semantical characterizations: \emph{notional} and \emph{quasi-notional graded doxastic models}.

First, we redefine models in terms closer to Kripke semantics for modal logics by introducing the notion of distance between states for a given group of agents as the sum of degrees in the first world of all beliefs not satisfied in the second world for all agents in the group. Note that this notion of distance is not (in general) symmetric.

\begin{definition}
\label{def:NGDM}
A \emph{notional graded doxastic model (NGDM)} is a tuple $M = \tuple{\Worlds, \Dox, \dist, \Val}$ where $\Worlds$ is a set of worlds, ${\Dox \colon \Agents \times \Worlds \to \MultiSetE}$ is a doxastic function, ${\dist \colon \groups \times \Worlds \times \Worlds \to \NatZeroInf}$ is a distance function, and ${\Val \colon \Props \to \PowSet{\Worlds}}$ is a valuation, such that:
\begin{equation}
\tag{NGDM-DOX}
\label{eq:NGDM_dist_dox}
\dist(\group, w, u) = \sum_{\substack{\alpha \in \LangE \\ (M,u) \not\models \alpha}} \sum_{\agent \in \group} \Dox(\agent, w)( \alpha)
\end{equation}
with satisfaction relation defined as follows:

\begin{tabular}{lll}
$(M,w) \models \prop$ & $\Leftrightarrow$ & $w \in \Val(\prop)$  \\
$(M,w) \models \neg \varphi$ & $\Leftrightarrow$ & $(M,w) \not\models \varphi$  \\
$(M,w) \models \varphi_1 \land \varphi_2$ & $\Leftrightarrow$ & $(M,w) \models \varphi_1$ and $(M,w) \models \varphi_2$ \\
$(M,w) \models \ltri{\agent}{k} \alpha$ & $\Leftrightarrow$ & $\Dox(\agent, w)(\alpha) \geq  k$  \\
$(M,w) \models \lbox{\group}{k} \varphi$ & $\Leftrightarrow$ & $\forall u \in \Worlds: \dist(\group, w, u) \le k \Rightarrow (M,u) \models \varphi$  \\
\end{tabular}

$M$ is called \emph{finite} when $W$ is finite and $\supp{\Dox(\agent, w)}$ is finite for every $\agent$ and $w$.
\end{definition}

Notice that condition~\refeq{eq:NGDM_dist_dox} fully defines distance function $\rho$ via doxastic function $\Dox$ and the resulting distance function is \emph{additive on groups}:
\begin{equation}
\tag{NGDM-$\rho$-ADD}
\label{eq:NGDM_dist_add}
\dist(\group, w, u) = \sum_{\agent \in \group} \dist(\{\agent\}, w, u)
\end{equation}




However, condition~\refeq{eq:NGDM_dist_dox} itself is not axiomatizable (see Remark~\ref{rem:NDM_axiomatizability} below), therefore we also introduce the notion of \emph{quasi-notional graded doxastic models (QNGDMs)} where this condition is weakened to become axiomatizable. First, instead of equality, we require that the distance is no smaller than the sum of 
degrees of the unsatisfied beliefs (condition~\refeq{eq:qNGDM_dist_dox}). Additionally, we require a weakened version of \refeq{eq:NGDM_dist_add} (condition~\refeq{eq:qNGDM_dist_add}): the finite distance for every group $\group$ can be partitioned into summands $\summand{\agent}$ for every agent $\agent \in \group$ such that $\summand{\agent}$ is at least the distance for any agent $\agent \in \group$, and moreover the distance for any sub-group does not exceed the sum of $\summand{\agent}$ for the agents involved.

\begin{definition}
\label{def:QNGDM}
A \emph{quasi-notional graded doxastic model (QNGDM)} is a tuple $M = \tuple{\Worlds, \Dox, \dist, \Val}$ where $\Worlds$, $\dist$, $\Dox$, $\Val$ are as in \definitionword~\ref{def:NGDM} except that \refeq{eq:NGDM_dist_dox} is replaced by the two following weaker conditions for every $\group \in \groups$ and $w,u \in W$ if $\dist(\group, w, u) \neq \plusinfty$:
\begin{equation}
\tag{QNGDM-DOX}
\label{eq:qNGDM_dist_dox}
\dist(\group, w, u) \ge \sum_{\substack{\alpha \in \LangE \\ (M,u) \not\models \alpha}} \sum_{\agent \in \group} \Dox(\agent, w)( \alpha) 
\end{equation}
\begin{equation}
\tag{QNGDM-$\rho$-ADD}
\label{eq:qNGDM_dist_add}
\exists \summandf \in \partitions{\group}{\dist(\group, w,u)} \textit{ such that for any non-empty } \group' \subset \group,  \sum_{\agent \in \group'} \summand{\agent} \ge \dist(\group', w, u)
\end{equation}
\end{definition}

We will now show equivalence of the three semantics following the strategy of~\cite{BeliefBasesAIJ}: we will first show that semantics of QNGDMs satisfies the finite model property (\lemmaword~\ref{lem:QNGDM_FMP}), then we will show how every satisfying \emph{finite} QNGDM can be transformed into a satisfying NGDM (\lemmaword~\ref{lem:QNGDM_to_NGDM}), satisfying NGDM --- into a satisfying MAGBM (\lemmaword~\ref{lem:NGDM_to_MAGBM}), and a satisfying MAGBM into a satisfying QNGDM (\lemmaword~\ref{lem:MAGBM_to_QNGDM}), closing the circle. 

To establish Finite Model Property for QNGDMs we adapt the standard filtration technique.

\begin{lemma}
\label{lem:QNGDM_FMP}
If $M$ is QNGDM satisfying $\varphi \in \Lang$ then there exists a finite QNGDM $M'$ satisfying $\varphi$.
\end{lemma}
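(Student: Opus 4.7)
The plan is to adapt the standard filtration technique to the graded doxastic setting. Fix $M = \tuple{\Worlds, \Dox, \dist, \Val}$ satisfying $\varphi$ at some world $w_0$. Let $K \in \Nat$ be the maximum numerical index appearing in a modality $\ltri{\agent}{k}$ or $\lbox{\group}{k}$ within $\varphi$, and write $\SubF{\varphi}^0 := \SubF{\varphi} \cap \LangE$. Define the filtration equivalence $\equivFilter$ on $\Worlds$ by $w \equivFilter w'$ iff (i) $w$ and $w'$ satisfy exactly the same formulas in $\SubF{\varphi}$, and (ii) for every $\alpha \in \SubF{\varphi}^0$ and every $\agent \in \Agents$, $\min(\Dox(\agent, w)(\alpha),\, K+1) = \min(\Dox(\agent, w')(\alpha),\, K+1)$. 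The finiteness of $\SubF{\varphi}$ and $\Agents$, together with the cap $K+1$, ensures finitely many $\equivFilter$-classes.

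Construct the quotient model $M' = \tuple{\Worlds', \Dox', \dist', \Val'}$ with $\Worlds' := \Worlds/{\equivFilter}$, valuation $\Val'(p) := \{[w] : w \in \Val(p)\}$ for $p \in \SubF{\varphi}$ (arbitrary otherwise), doxastic function
\[ \Dox'(\agent, [w])(\alpha) := \begin{cases} \min(\Dox(\agent, w)(\alpha),\, K+1) & \text{if } \alpha \in \SubF{\varphi}^0, \\ 0 & \text{otherwise,} \end{cases} \]
which is well-defined by condition (ii) and has finite support, and distance function
\[ \dist'(\group, [w], [u]) := \minhat \{ \dist(\group, w', u') \mid w' \equivFilter w,\ u' \equivFilter u \}. \]
Then establish a truth lemma $(M, w) \models \psi \Leftrightarrow (M', [w]) \models \psi$ by induction on $\psi \in \SubF{\varphi}$. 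The propositional and $\ltri{\agent}{k}\alpha$ cases follow directly from the equivalence and the cap $K+1$ (noting $k \leq K$). For $\lbox{\group}{k}\chi$ with $k \leq K$, the $(\Leftarrow)$ direction uses $\dist'(\group, [w], [u]) \leq \dist(\group, w, u)$, immediate by taking representatives $(w, u)$, together with the induction hypothesis on $\chi$. The $(\Rightarrow)$ direction exploits condition (i): from a witness pair $(w^*, u^*)$ realising $\dist(\group, w^*, u^*) \leq k$, since $w^* \equivFilter w$ and $\lbox{\group}{k}\chi \in \SubF{\varphi}$ we have $(M, w^*) \models \lbox{\group}{k}\chi$, so $(M, u^*) \models \chi$; by (i) applied to $u \equivFilter u^*$ also $(M, u) \models \chi$, and the induction hypothesis closes the case.

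Finally, verify that $M'$ is a QNGDM. Condition \refeq{eq:qNGDM_dist_dox} is obtained by taking a representative pair $(w', u')$ attaining the $\minhat$ in $\dist'(\group, [w], [u])$: apply \refeq{eq:qNGDM_dist_dox} in $M$ to $(w', u')$, restrict the sum to $\SubF{\varphi}^0$, cap the $\Dox$-values at $K+1$, and rewrite in terms of $\Dox'$ and $(M', [u])$ using conditions (i) and (ii) of $\equivFilter$. For \refeq{eq:qNGDM_dist_add}, lift the partition witness $\summandf$ from \refeq{eq:qNGDM_dist_add} in $M$ applied to a minimizing pair $(w^*, u^*)$; since $\sum_{\agent \in \group} \summand{\agent} = \dist(\group, w^*, u^*) = \dist'(\group, [w], [u])$ and $\dist'(\group', [w], [u]) \leq \dist(\group', w^*, u^*) \leq \sum_{\agent \in \group'} \summand{\agent}$ by the definition of $\dist'$ as a $\minhat$, the same $\summandf$ witnesses additivity in $M'$.

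The main obstacle is coordinating the three ingredients --- the equivalence, the definition of $\dist'$, and the two QNGDM conditions --- so that the truth lemma and both structural axioms hold simultaneously. Condition (i) of $\equivFilter$ is essential for the $(\Rightarrow)$ direction of the box case (it propagates box-truth across equivalent representatives), while the ``minimal'' choice of $\dist'$ via $\minhat$ simultaneously ensures the $(\Leftarrow)$ direction of the truth lemma and the partition-lifting argument needed for \refeq{eq:qNGDM_dist_add}.
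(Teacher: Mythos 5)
Your proof is correct and follows essentially the same filtration route as the paper: quotient by agreement on the subformulas of $\varphi$, take $\dist'$ as the minimum of $\dist$ over representatives, truncate the doxastic function, and then verify the truth lemma together with \refeq{eq:qNGDM_dist_dox} and \refeq{eq:qNGDM_dist_add} exactly as you do (the paper defines $\Dox'$ via the grades actually occurring in $\varphi$ rather than via a uniform cap, but this difference is immaterial). One small repair: since the grammar allows $k=\plusinfty$ in $\ltri{\agent}{k}\alpha$, the cap $\min(\Dox(\agent,w)(\alpha),K+1)$ breaks the truth lemma for a subformula $\ltri{\agent}{\plusinfty}\alpha$ when $\Dox(\agent,w)(\alpha)=\plusinfty$; fix this by letting the cap send $\plusinfty$ to $\plusinfty$, which keeps the quotient finite since the truncated values then range over $\{0,\dots,K+1,\plusinfty\}$.
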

\begin{proof}{\it(Sketch)}
We consider an equivalence relation $\equivFilter$ on worlds of $M$, relating worlds with the same evaluation on all subformulas of $\varphi$. Then we transform $M = \tuple{\Worlds,\dist,\Dox,\Val}$ into finite a QNGDM ${\tuple{\Worlds',\dist',\Dox',\Val'}}$ as follows: $\Worlds' = \faktor{\Worlds}{\equivFilter}$; $\dist'(\group, U, V) \eqdef \min \{ \dist(\group, u, v) \mid {u \in U, v \in V}  \}$; $\Dox'(\agent, U)(\alpha) \eqdef {\maxhat \{ k \mid {\ltri{\agent}{k} \alpha \textit{ is a subformula of }\varphi \textit{ and } \Dox(\agent, u)(\alpha) \ge k \textit{ for all $u \in U$} \}}}$; $\Val'(\prop) \eqdef \{U \mid U \subseteq \Val(p) \}$. We check that $M'$ preserves evaluation on subformulas of $\varphi$ and satisfies both conditions of QNGDMs.
\end{proof}

Now we show how to transform a finite QNGDM into a (finite) NGDM. We will adapt the idea of two-stage model tranformation for distributed belief bases from~\cite{DBLP:conf/ecai/HerzigLPRS20} to our graded setting. At the fist stage we achieve \eqref{eq:NGDM_dist_add} in a QNDM by creating copies of each world for every possible group and redefining distances for them on the basis of partitions given by condition~\eqref{eq:qNGDM_dist_add}. At the second stage we achieve \eqref{eq:NGDM_dist_dox} by adapting the transformation from~\cite{DBLP:conf/jelia/LoriniS21} to our case: introducing a fresh characterizing atom for each world 
and add such atoms as beliefs with the required degree to satisfy the equality in \eqref{eq:NGDM_dist_dox}. 


\begin{lemma}
\label{lem:QNGDM_to_NGDM}
If $M$ is a finite QNGDM satisfying $\varphi \in \Lang$ then there exists a finite NGDM $M''$ satisfying $\varphi$.
\end{lemma}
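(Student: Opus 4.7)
The plan is to carry out the two stages indicated in the sketch, using the finiteness of $M$ throughout to avoid any trouble with infinite sums or supports. Fix a world $w_0$ of $M$ such that $(M, w_0) \models \varphi$.

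Stage 1 constructs an intermediate finite QNGDM $M'$ in which the distance function is strictly additive on groups, so that condition \eqref{eq:NGDM_dist_add} holds. Following the duplication technique from~\cite{DBLP:conf/ecai/HerzigLPRS20}, I take $\Worlds' = \Worlds \times \groups$ and, for every triple $(\group, w, u)$ with $\dist(\group, w, u) \neq \plusinfty$, fix a witness partition $\summandf_{\group,w,u} \in \partitions{\group}{\dist(\group, w, u)}$ provided by \eqref{eq:qNGDM_dist_add}. I then define $\dist'(\group', (w, \group_0), (u, \group)) = \sum_{\agent \in \group'} \summandf_{\group, w, u}(\agent)$ whenever $\group' \subseteq \group$ and $\dist(\group, w, u)$ is finite, and $\plusinfty$ otherwise. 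Each copy $(u, \group)$ inherits the doxastic base and the valuation of $u$, so evaluation of $\LangE$-formulas is preserved. Additivity on groups is then immediate from the definition; the sub-group inequality in \eqref{eq:qNGDM_dist_add} ensures that \eqref{eq:qNGDM_dist_dox} still holds for $\dist'$; and induction on the modal depth of $\varphi$ yields $(M', (w_0, \group^*)) \models \varphi$ for any fixed reference group $\group^*$.

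Stage 2 upgrades $M'$ to a genuine NGDM $M''$ by injecting exactly enough fresh beliefs to make \eqref{eq:qNGDM_dist_dox} an equality. For each world $u \in \Worlds'$ I pick a fresh propositional atom $p_u$ not occurring in $\varphi$, set $\Val''(p_u) = \{u\}$, and, for each agent $\agent$ and each source $w \in \Worlds'$, compute the deficit $d_\agent(w, u) = \dist'(\{\agent\}, w, u) - \sum_{\alpha \in \LangE,\ (M', u) \not\models \alpha} \Dox'(\agent, w)(\alpha)$, which is a non-negative natural number by \eqref{eq:qNGDM_dist_dox} and by finiteness of $M'$. I then add $\neg p_u$ to $\Dox''(\agent, w)$ with multiplicity $d_\agent(w, u)$, leaving all other multiplicities unchanged. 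Since $p_u$ is true exactly at $u$, the added degrees contribute exactly $d_\agent(w, u)$ to the sum at target $u$ and zero to the sum at any other target, so \eqref{eq:NGDM_dist_dox} becomes an equality for singletons; additivity from Stage 1 then propagates the equality to arbitrary groups. Freshness of each $p_u$ guarantees that subformulas of $\varphi$ retain their truth values, so $(M'', (w_0, \group^*)) \models \varphi$.

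The main obstacle I expect lies in Stage 1: verifying that the modal induction goes through requires checking that, for each group $\group$ appearing in a subformula of $\varphi$, the $k$-ball around $(w_0, \group^*)$ under $\dist'$ corresponds (via projection onto the first coordinate) to the $k$-ball around $w_0$ under $\dist$. The delicate point is selecting, for each original target $u$, the ``right'' copy $(u, \group)$ so that $\dist'(\group, (w_0, \group^*), (u, \group)) = \sum_{\agent \in \group} \summandf_{\group, w_0, u}(\agent) = \dist(\group, w_0, u)$, with the copies indexed by other groups producing only strictly larger (or infinite) distances and hence never interfering. Once this is in place, Stage 2 reduces to routine bookkeeping on multisets.
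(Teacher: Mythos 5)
Your proof is correct and follows essentially the same two-stage construction as the paper: duplicating worlds indexed by groups and redefining distances via the witness partitions from \eqref{eq:qNGDM_dist_add}, then injecting a fresh belief per target world (you use $\neg p_u$ with $p_u$ true only at $u$; the paper equivalently uses a fresh atom false only at $u$) with multiplicity equal to the deficit so that \eqref{eq:qNGDM_dist_dox} becomes an equality. The ``delicate point'' you flag about selecting the copy $(u,\group')$ to realize the exact distance, with all other copies yielding distances at least as large, is exactly the argument the paper relies on; just make sure the fresh atoms also avoid the supports of the doxastic function, not only $\varphi$.
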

\begin{proof}{\it(Sketch)}
We first change the set of worlds $W$ in $M$ to $W' = W \times \groups$, keep $\Val$ and $\Dox$ the same for each copy, and redefine distances for copies using $\summandf$ from condition~\eqref{eq:qNGDM_dist_add} for $\group$, $w$ and $u$: $\dist'(\group', (w,\group''), (u,\group)) = \sum_{\agent \in J'}\summand{\agent}$ for $\group' \subseteq \group$ and $\dist'(\group', (w,\group''), (u,\group)) = \plusinfty$ otherwise. With such definition of distances, the transformed model $M'$ preserves the satisfaction relation for each copy, trivially satisfies 
\eqref{eq:NGDM_dist_add}, and still satisfies \eqref{eq:qNGDM_dist_dox}, since distances between copies did not decrease w.r.t. original distances (by condition~\eqref{eq:qNGDM_dist_add}).
At the second stage, for each $w' \in W'$ we select a distinctive atom $\atomMap(w')$ not appearing in $\varphi$ and in any $\alpha \in \supp{\Dox'(i,w'})$ in $M'$ (which we can do since $M'$ is finite), and change $\Val'$ to $\Val''$ such that $\Val''(\chi(w')) = W' \setminus \{ w' \}$ (not changing the valuation on other atoms). Then we change the degrees of these atoms: $\Dox'(\agent, w)(\atomMap(u)) =
 \dist(\agent, w, u) - \sum\limits_{\substack{\alpha \in \LangE \\ (M',u) \not\models \alpha}} \Dox(\agent, w)(\alpha)$, turning inequality in \eqref{eq:qNGDM_dist_dox} into equality for $M'' = \tuple{\Worlds',\dist',\Dox'',\Val''}$.
\end{proof}

\begin{remark}
\label{rem:NDM_axiomatizability}
Contrapositive reading of Lemma~\ref{lem:QNGDM_to_NGDM} implies that any formula from $\lang$ satisfied in all NGDMs is also satisfied in all QNGDMs, which is a proper superclass of NDMs. Therefore there can be no characterizing axiom $\xi \in \lang$ that is true exactly in the QNGDMs satisfying \eqref{eq:NGDM_dist_dox}. On the other hand, as we will see in the next section, there is a characterizing axioms in this sense for properties \eqref{eq:qNGDM_dist_dox} and \eqref{eq:qNGDM_dist_add}.
\end{remark}

We can also easily transform a satisfying NGDM into a satisfying MAGBM.

\begin{lemma}
\label{lem:NGDM_to_MAGBM}
If $\varphi$ is satisfied by some NGDM then $\varphi$ is satisfied by some MAGBM.
\end{lemma}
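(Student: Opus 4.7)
The plan is to directly transform a satisfying NGDM into a MAGBM by reading each world as a state. Suppose $M = \tuple{\Worlds, \Dox, \dist, \Val}$ is an NGDM with $(M, w_0) \models \varphi$. For each $w \in \Worlds$ I would define the state
\[
S_w \eqdef (\Dox(1,w), \ldots, \Dox(n,w), V_w), \qquad V_w \eqdef \{ p \in \PROP \mid w \in \Val(p) \},
\]
and set the context $\iconstraint \eqdef \{ S_w \mid w \in \Worlds \}$, producing the candidate MAGBM $(S_{w_0}, \iconstraint)$.

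First I would establish an auxiliary equivalence on the belief-base fragment by induction on $\alpha \in \LangE$: for all $w \in \Worlds$, $(M, w) \models \alpha$ iff $S_w \models \alpha$. The propositional atom case is immediate from the definition of $V_w$; the case $\alpha = \ltri{i}{k}\beta$ is immediate too, since the $i$-th belief base of $S_w$ is exactly $\Dox(i,w)$ and so the two satisfaction clauses coincide without any hypothesis on $\beta$; the boolean cases are routine.

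Next I would use the NGDM condition~\refeq{eq:NGDM_dist_dox} together with this auxiliary equivalence to verify that for all $w, u \in \Worlds$, $\group \in \PAGT$ and $k \in \NatZero$,
\[
S_w \relstate{\group}^{k} S_u \iff \dist(\group, w, u) \leq k,
\]
as both sides unfold to $\sum_{\alpha \in \LangE,\, (M,u) \not\models \alpha} \sum_{i \in \group} \Dox(i,w)(\alpha) \leq k$, using $S_u \not\models \alpha \Leftrightarrow (M,u) \not\models \alpha$ from the auxiliary equivalence.

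Finally, I would prove by induction on $\varphi \in \lang$ that $(M, w) \models \varphi \iff (S_w, \iconstraint) \models \varphi$ for all $w \in \Worlds$. The base case $\varphi \in \LangE$ is the auxiliary equivalence, the boolean cases are immediate, and the modal case $\varphi = \impbel{\group}{k}\psi$ combines the accessibility-relation equivalence just established with the inductive hypothesis on $\psi$, using that every element of $\iconstraint$ has the form $S_u$ for some $u \in \Worlds$. Instantiating at $w_0$ yields $(S_{w_0}, \iconstraint) \models \varphi$, so $\varphi$ is satisfied by a MAGBM. I do not anticipate any real obstacle: condition~\refeq{eq:NGDM_dist_dox} is engineered precisely so that the belief-base-induced relation $\relstate{\group}^k$ mirrors the NGDM distance $\dist$, which reduces the lemma to a straightforward verification.
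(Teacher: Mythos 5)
Your proposal is correct and matches the paper's own argument: the paper likewise maps each world to a state by reading the belief bases off the doxastic function and the valuation, takes the image as the context, and relies on condition \refeq{eq:NGDM_dist_dox} to make the induced relations $\relstate{\group}^{k}$ coincide with the distances, so evaluation is preserved. Your write-up just makes the paper's one-line sketch explicit (auxiliary equivalence on $\LangE$, relation equivalence, then induction on $\varphi$), which is exactly the intended verification.
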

\begin{proof}{(Sketch)}
Each world can be mapped into a state (by reconstructing belief bases from the doxastic function) and the evaluation will be preserved thanks to the condition \eqref{eq:NGDM_dist_dox}.
\end{proof}

And, finally, we can move from MAGBMs back to QNGDMs, closing the cycle.

\begin{lemma}
\label{lem:MAGBM_to_QNGDM}
If $\varphi$ is satisfied by some MAGBM then $\varphi$ is satisfied by some QNGDM.
\end{lemma}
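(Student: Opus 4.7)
The plan is to transform the given MAGBM into a QNGDM whose worlds are the states relevant for modal evaluation, while isolating the evaluation point when it lies outside the context. Suppose $(S,\iconstraint) \models \varphi$ with $S = (\belbaseset_1^S, \ldots, \belbaseset_n^S, V^S)$. I define $M = \tuple{\Worlds, \Dox, \dist, \Val}$ by setting $\Worlds \eqdef \iconstraint \cup \{S\}$, and for each world $w = (\belbaseset_1^w, \ldots, \belbaseset_n^w, V^w) \in \Worlds$, $\Dox(\agent, w) \eqdef \belbaseset_\agent^w$ and $\Val(p) \eqdef \{ w \in \Worlds : p \in V^w \}$. For the distance I put $\dist(\group, w, u) \eqdef \sum_{\alpha \in \LangE,\ (M, u) \not\models \alpha} \belbaseset_\group^w(\alpha)$ whenever $u \in \iconstraint$, and $\dist(\group, w, u) \eqdef \plusinfty$ otherwise. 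The second clause is the crucial design choice: should $S \notin \iconstraint$, it ensures that the added world $S$ is unreachable under any finite-degree modality, so modal quantification in $M$ effectively ranges over $\iconstraint$, matching the MAGBM semantics.

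Next I verify the two QNGDM conditions. Condition~\eqref{eq:qNGDM_dist_dox} holds with equality whenever $\dist(\group, w, u)$ is finite: in that case $u \in \iconstraint$, and by Definition~\ref{Merging} we have $\belbaseset_\group^w(\alpha) = \sum_{\agent \in \group} \belbaseset_\agent^w(\alpha) = \sum_{\agent \in \group} \Dox(\agent, w)(\alpha)$. For condition~\eqref{eq:qNGDM_dist_add}, given a finite $\dist(\group, w, u)$ I take the partition $\summand{\agent} \eqdef \sum_{\alpha \in \LangE,\ (M, u) \not\models \alpha} \belbaseset_\agent^w(\alpha)$; these summands are finite (since the total is), add up to $\dist(\group, w, u)$, and by the same expansion for every non-empty $\group' \subseteq \group$ one obtains $\sum_{\agent \in \group'} \summand{\agent} = \dist(\group', w, u)$, so the required inequality in fact holds with equality.

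It remains to establish by structural induction on $\psi \in \lang$ that $(M, w) \models \psi$ iff $(w, \iconstraint) \models \psi$ for every $w \in \Worlds$. The cases for atoms, Boolean connectives, and $\ltri{\agent}{k}\alpha$ are immediate from the definitions of $\Val$ and $\Dox$ together with Definition~\ref{truthcond1}. For $\lbox{\group}{k}\psi'$, the central observation is that, by the construction of $\dist$, the set $\{ u \in \Worlds : \dist(\group, w, u) \le k \}$ coincides exactly with $\{ u \in \iconstraint : w \relstate{\group}^{k} u \}$ of Definition~\ref{DefAlternative}: worlds outside $\iconstraint$ are excluded by the $\plusinfty$ clause, while for $u \in \iconstraint$ the finite value matches the MAGBM condition. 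Invoking the induction hypothesis then yields the equivalence, and instantiating at $w = S$ gives $(M, S) \models \varphi$, as required.

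The only real subtlety is the treatment of the case $S \notin \iconstraint$; without the $\plusinfty$ clause the modal operators at worlds in $\iconstraint$ (as well as at $S$ itself) could see the added world $S$ and fail to match the MAGBM semantics. Beyond this, the construction is essentially canonical; when $S \in \iconstraint$ it yields in fact an NGDM, and the weaker QNGDM framework is needed only to accommodate the general case cleanly.
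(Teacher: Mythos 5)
Your proposal is correct and follows essentially the same route as the paper's (much terser) sketch: worlds are the context plus the evaluation state, the doxastic function is read off the belief bases, distances are defined by the \eqref{eq:NGDM_dist_dox} formula, and distances into the added state are set to $\plusinfty$ when it lies outside the context. Your additional verifications (equality in \eqref{eq:qNGDM_dist_dox}, the additive partition for \eqref{eq:qNGDM_dist_add}, and the truth lemma) are exactly the details the paper leaves implicit.
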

\begin{proof}{(Sketch)}
The doxastic function is defined by belief bases, and the distances can be defined via condition \eqref{eq:NGDM_dist_dox}, apart from the distances to the initial state $S$ defined as $\plusinfty$ (to reflect that, in general, $S$ does not belong to the context $U$). 
\end{proof}

Thus, we have established the equivalence of all three introduced semantics (MAGBMs, NGDMs, and QNGDMs), which will be crucial for axiomatization of the proposed logic. Moreover, the described transformations turn an arbitrary QNGDM into an NGDM of exponential size, which implies decidability of the proposed logic (since all NGDMs of bounded size can be checked for satisfaction in finite time).

\section{Axiomatics}\label{sec:Hilbert}

In this section, we present an axiomatization for the proposed logic 
\LogicName
(Logic of Graded Distributed Doxastic Attitudes) 
based on the semantics of QNGDMs introduced in the previous section.

\begin{definition}
\label{def:axioms}
Logic \LogicName extends the classical propositional logic by the following axioms and rules:
\begin{align*}
\frac{\varphi}{\lbox{\group}{k} \varphi} & & \tag{$\mathbf{Nec}_{\lbox{\group}{k}}$}\label{eq:box_nec_rule} \\
\lbox{\group}{k} (\varphi \to \psi) \to (\lbox{\group}{k} \varphi \to \lbox{\group}{k} \psi) & & \tag{$\mathbf{K}_{\lbox{\group}{k}}$}\label{eq:box_K_axiom} \\ 
\ltri{\agent}{k} \alpha \to \ltri{\agent}{k'} \alpha & \qquad \textit{if } k \ge k' & \tag{$\mathbf{Mon}_{\ltri{\agent}{l}}$}\label{eq:tri_mon_axiom} \\
(\bigwedge_{\ltri{\agent}{k'} \alpha \in \Omega} \ltri{\agent}{k'} \alpha) \to \lbox{\group}{k} \!\!\!\!\!\!\!\!\bigvee_{\substack{\Omega' \subseteq \Omega \\  \textit{Sum}(\Omega') \le k}} \bigwedge_{\ltri{\agent}{k'} \alpha \in \Omega \setminus \Omega'} \!\!\!\!\alpha & \qquad\begin{array}{l}
\textit{if } \Omega \subseteq \{ \ltri{\agent}{k'} \alpha \mid \agent \in \group \} \\ \textit{and } \ltri{\agent}{k'} \alpha, \ltri{\agent}{k''} \alpha \in \Omega \Rightarrow k'  = k'' \\ \textit{where $\textit{Sum}(\Omega') = \sum_{\ltri{\agent}{k'} \!\!\alpha \in \Omega'} k'$} \end{array} & \tag{$\mathbf{Int}_{\ltri{\agent}{k'},\lbox{\group}{k}}$}\label{eq:tri_box_int_axiom} \\
(\bigwedge_{\lbox{\group'}{k'} \psi \in \Psi} \lbox{\group'}{k'} \psi) \to \lbox{\group}{k} \bigvee_{\summandf \in \partitions{\group}{k}} \bigwedge_{\substack{\lbox{\group'}{k'} \psi \in \Psi \\ \sum_{\agent \in \group'} \summandf(\agent) \le k'}} \!\!\!\!\!\!\!\!\!\!\!\!\!\psi 
&  \qquad \textit{if } \Psi \subseteq \{ \lbox{\group'}{k'} \psi \mid \group' \subseteq \group \}   & \tag{$\mathbf{Int}_{\lbox{\group'}{k'},\lbox{\group}{k}}$}\label{eq:box_box_int_axiom} \\
\end{align*}

\vspace{-7mm}\noindent $\varphi \in \Lang$ is said to be derivable from $\Gamma \subseteq \Lang$ in \LogicName (denoted $\Gamma \vdash_\LogicName \varphi$) when there is finite $\Gamma_f \subseteq \Gamma$ such that formula $(\bigwedge_{\psi \in \Gamma_f} \psi) \to \varphi$ can be derived using axioms and rules of \LogicName.
\end{definition}

Rule \eqref{eq:box_nec_rule} and axiom \eqref{eq:box_K_axiom} reflect the fact that $\lbox{\group}{k}$ is a normal modality. Monotonicity axiom \eqref{eq:tri_mon_axiom} reflects the fact that $k$ in modality $\ltri{\agent}{k}$ gives a lower bound on the weight of the belief. Two final axioms determine the interaction between triangles and boxes, and between boxes with different groups and degrees respectively. Informally, axiom \eqref{eq:tri_box_int_axiom} states that group $\group$ believes with level $k$ that their pulled beliefs are correct apart from some subset with the cummulative importance not exceeding $k$.
Axiom \eqref{eq:box_box_int_axiom} captures the fact that the distance $d$ between states (as defined by \eqref{eq:NGDM_dist_dox}) for group $\group$ can be partitioned into distances $\summand{\agent}$ for $\agent \in \group$ such that any graded belief for a subgroup $\group'$ with degree $k'$ is preserved for group $\group$ with degree $k \ge d$, as long as $k'$ is greater than the sum of distances $\summand{\agent}$ for $\agent \in \group'$. Notice that the following two validities, defining monotonicity of boxes w.r.t. group and degrees, are instances of \eqref{eq:box_box_int_axiom} for the case when there is only one box on the left:
\begin{align*}
\lbox{\group}{k} \varphi \to \lbox{\group}{k'} \varphi & \qquad \textit{if } k \ge k' & \tag{$\mathbf{Mon}^k_{\lbox{\group}{k}}$}\label{eq:box_mon_k_axiom} \\
\lbox{\group}{k} \varphi \to \lbox{\group'}{k} \varphi & \qquad \textit{if } \group \subseteq \group' & \tag{$\mathbf{Mon}^\group_{\lbox{\group}{k}}$}\label{eq:box_mon_group_axiom}
\end{align*}
\begin{remark}
Note that the axiomatization in \definitionword~\ref{def:axioms} does not simply adjust and merge axioms for graded beliefs and distributed beliefs. While \eqref{eq:tri_box_int_axiom} is a natural adjustment of axiom $(\mathbf{Int}_{\Delta_i,\Box_i})$ from~\cite{DBLP:conf/jelia/LoriniS21} for the case of graded belief bases, the logic of distributed belief from~\cite{DBLP:conf/ecai/HerzigLPRS20} requires only axiom \eqref{eq:box_mon_group_axiom}, and our logic for distributed graded belief requires significantly more sophisticated (and perhaps less intuitive) axiom \eqref{eq:box_box_int_axiom} that reflects the combinatorics of partition of distances for groups and subgroups captured in condition~\eqref{eq:qNGDM_dist_add}.  
\end{remark}

The proposed axiomatization is sound and complete w.r.t. QNGDM model semantics. 
To prove completeness we construct a canonical QNGDM, using rule \eqref{eq:box_nec_rule} and axioms \eqref{eq:box_K_axiom} and \eqref{eq:tri_mon_axiom} to establish the truth lemma, and axioms \eqref{eq:tri_box_int_axiom} and \eqref{eq:box_box_int_axiom} characterizing conditions \eqref{eq:qNGDM_dist_dox} and \eqref{eq:qNGDM_dist_add} of QNGDMs respectively.

\begin{theorem}
\label{th:axioms_sound_complete}
$\Gamma \vdash_{\LogicName} \varphi$ iff $\Gamma \vDash_{QNGDM} \varphi$ (i.e. any pointed QDNM satisfying all formulas from $\Gamma$ also satisfies $\varphi$).
\end{theorem}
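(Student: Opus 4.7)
The plan is a standard Henkin-style completeness proof based on a canonical QNGDM, paired with a direct semantic soundness check. We target QNGDM validity rather than NGDM validity because, as Remark~\ref{rem:NDM_axiomatizability} shows, the NGDM class is not definable in $\Lang$.

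Soundness amounts to verifying validity of each axiom and rule in an arbitrary QNGDM. Rule \eqref{eq:box_nec_rule} and axiom \eqref{eq:box_K_axiom} hold because $\lbox{\group}{k}$ is interpreted as a normal modality over the accessibility relation $\{u : \dist(\group,w,u) \le k\}$; axiom \eqref{eq:tri_mon_axiom} follows immediately from the truth clause for $\ltri{\agent}{k}$. Axiom \eqref{eq:tri_box_int_axiom} encodes condition \eqref{eq:qNGDM_dist_dox}: if every $\ltri{\agent_j}{k_j}\alpha_j \in \Omega$ holds at $w$ and $\dist(\group,w,u) \le k$, then \eqref{eq:qNGDM_dist_dox} bounds by $k$ the total weight in $\Dox(\cdot,w)$ of formulas failing at $u$, so choosing $\Omega'$ to consist of precisely those members of $\Omega$ whose $\alpha_j$ fails at $u$ yields $\mathit{Sum}(\Omega') \le k$, witnessing the disjunction. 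Axiom \eqref{eq:box_box_int_axiom} encodes condition \eqref{eq:qNGDM_dist_add}: given $u$ with $\dist(\group,w,u) \le k$, take a witnessing partition $\summandf$; for each $\lbox{\group'}{k'}\psi$ in the antecedent with $\sum_{\agent \in \group'}\summandf(\agent) \le k'$, condition \eqref{eq:qNGDM_dist_add} gives $\dist(\group',w,u) \le k'$, hence $\psi$ holds at $u$.

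For completeness I would build a canonical model $M^c = \tuple{\Worlds^c,\Dox^c,\dist^c,\Val^c}$ where $\Worlds^c$ is the set of maximal \LogicName-consistent subsets of $\Lang$, $\Val^c(p) = \{\Phi : p \in \Phi\}$, $\Dox^c(\agent,\Phi)(\alpha)$ is the supremum in $\NatZeroInf$ of all $k \in \NatOneInf$ with $\ltri{\agent}{k}\alpha \in \Phi$ (and $0$ when that set is empty; axiom \eqref{eq:tri_mon_axiom} keeps this supremum well-behaved), and $\dist^c(\group,\Phi,\Psi) = \minhat\{k \in \NatZeroInf : \forall\varphi \in \Lang,\ \lbox{\group}{k}\varphi \in \Phi \Rightarrow \varphi \in \Psi\}$. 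A routine induction yields the truth lemma: the Boolean and $\ltri{\agent}{k}$ cases follow from the definitions, while the $\lbox{\group}{k}\varphi$ case uses the standard Henkin step based on \eqref{eq:box_nec_rule} and \eqref{eq:box_K_axiom} to produce, whenever $\lbox{\group}{k}\varphi \notin \Phi$, an MCS $\Psi$ containing $\{\neg\varphi\} \cup \{\psi : \lbox{\group}{k}\psi \in \Phi\}$ and hence satisfying $\dist^c(\group,\Phi,\Psi) \le k$.

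The main obstacle, and the only step where the more intricate axioms do real work, is verifying that $M^c$ genuinely is a QNGDM. For condition \eqref{eq:qNGDM_dist_dox}, fix $\Phi,\Psi$ with $\dist^c(\group,\Phi,\Psi) = k < \plusinfty$; for any finite $\Omega \subseteq \{\ltri{\agent}{k'}\alpha \in \Phi : \agent \in \group\}$ satisfying the per-pair uniqueness condition, axiom \eqref{eq:tri_box_int_axiom} and the definition of $\dist^c$ force the disjunction in the consequent to lie in $\Psi$, so some $\Omega' \subseteq \Omega$ with $\mathit{Sum}(\Omega') \le k$ contains every $\ltri{\agent}{k'}\alpha \in \Omega$ whose $\alpha$ is absent from $\Psi$; taking the supremum over all such finite $\Omega$ yields $\sum_{\alpha : \alpha \notin \Psi} \sum_{\agent \in \group} \Dox^c(\agent,\Phi)(\alpha) \le k$. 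For condition \eqref{eq:qNGDM_dist_add} (again with $k = \dist^c(\group,\Phi,\Psi) < \plusinfty$), suppose for contradiction no $\summandf \in \partitions{\group}{k}$ satisfies $\sum_{\agent \in \group'}\summand{\agent} \ge d_{\group'}$ for all non-empty $\group' \subseteq \group$, where $d_{\group'} = \dist^c(\group',\Phi,\Psi)$; for each such $\group'$ with $d_{\group'} > 0$, minimality in the definition of $\dist^c$ provides a witness $\psi_{\group'}$ with $\lbox{\group'}{d_{\group'}-1}\psi_{\group'} \in \Phi$ and $\psi_{\group'} \notin \Psi$. Applying axiom \eqref{eq:box_box_int_axiom} to this finite family, our supposition forces every disjunct in the consequent to contain some $\psi_{\group'_\summandf}$ missing from $\Psi$, so the disjunction cannot lie in $\Psi$, contradicting $\dist^c(\group,\Phi,\Psi) \le k$ and closing the argument.
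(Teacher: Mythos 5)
Your proposal is correct and follows essentially the same route as the paper: a direct semantic check of each axiom for soundness, and for completeness the same canonical QNGDM (maximal consistent sets, $\Dox^c$ as the supremum of the $\ltri{\agent}{k}$-grades, $\dist^c$ as the least $k$ transferring all $\lbox{\group}{k}$-boxes), with the truth lemma handled classically and conditions \eqref{eq:qNGDM_dist_dox} and \eqref{eq:qNGDM_dist_add} verified via axioms \eqref{eq:tri_box_int_axiom} and \eqref{eq:box_box_int_axiom} respectively. The only cosmetic differences are that you verify \eqref{eq:qNGDM_dist_dox} directly by a supremum over finite $\Omega$ where the paper argues contrapositively, and you index the witnesses for \eqref{eq:qNGDM_dist_add} by subgroups rather than by partitions; both variants are sound.
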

\begin{proof}{\it(Sketch)}
\textbf{Soundness:} Validity of axiom \eqref{eq:tri_box_int_axiom} in any QNGDM follows from the condition~\eqref{eq:qNGDM_dist_dox} (if $\dist(\group, w, u) \le k$ then
the disjunct for $\Omega' = \{\ltri{\agent}{k'} \alpha \in \Omega \mid M, u \not\models \alpha\}$ will be satisfied in $u$) and validity of axiom \eqref{eq:tri_box_int_axiom} --- from condition~\eqref{eq:qNGDM_dist_add} (it ensures existence of partition $\summandf$ that corresponds to a satisfied disjunct), validity of other axioms and preservation of validity by the rules is trivial.

\textbf{Completeness:} We construct a canonical model 
$M^C = \tuple{\Worlds^C, \Dox^C, \dist^C, \Val^C}$, where $\Worlds^C$ is the set of all \LogicName-maxiconsistent sets of formulas (i.e. maximal sets not deriving $\bot$), $\Dox^C(\agent, \completeSet)(\alpha) = \maxhat \{ l \colon \ltri{\agent}{l} \alpha \in \completeSet \}$; $\dist^C(\group, \completeSet, \completeSet') = \minhat \{ k \colon \lbox{\group}{k} \varphi \in \completeSet \Rightarrow \varphi \in \completeSet' \}$; $\Val^C(\prop) = \{ \completeSet \in \Worlds^C \colon \prop \in \completeSet \}$. The truth lemma stating $M^C, \completeSet \models \psi \Leftrightarrow \psi \in \completeSet$ is proved by structural induction 
on $\psi$, axiom \eqref{eq:tri_mon_axiom} is used for the case of triangles, and the case of box follows from presence of \eqref{eq:box_K_axiom} and \eqref{eq:box_nec_rule} as usual. Then conditions \eqref{eq:qNGDM_dist_dox} and \eqref{eq:qNGDM_dist_add} can be proved contrapositively using axioms \eqref{eq:tri_box_int_axiom} and \eqref{eq:box_box_int_axiom} respectively. Failure of \eqref{eq:qNGDM_dist_dox} for some $w$, $u$, and $\group$ would allow us to select some finite set $\Omega$ of beliefs false in $u$ with aggregated weight in $w$ greater than $\dist(\group, w, u)$, and applying axiom \eqref{eq:tri_box_int_axiom} we would conclude that some beliefs in $\Omega$ are true in $u$. Failure of \eqref{eq:qNGDM_dist_add} for some $w$, $u$, and $\group$ would imply existence of subgroup $\group_\summandf \subset \group$ for every $\summandf \in \partitions{\group}{\dist^C(\group, w, u)}$ such that $d_\summandf = \sum_{\agent \in \group_\delta} \summand{\agent} < \dist^C(\group_\delta, \completeSet, \completeSet')$, which also implies existence of $\psi_\summandf \in \Lang$ such that $\lbox{\group_\summandf}{d_\summandf} \psi_\summandf$ is satisfied in $w$ but $\psi_\summandf$ is not satisfied in $u$. Taking $\Omega = \{ \lbox{\group_\summandf}{d_\summandf} \psi_\summandf \mid \summandf \in \partitions{\group}{\dist^C(\group, w, u)} \}$ and applying axiom \eqref{eq:tri_box_int_axiom} to it, we can derive that some $\psi_\summandf$ is satisfied in $u$, reaching contradiction.
\end{proof}

\section{Tableau calculus and satisfiability checking}
\label{sec:tableaux}

In this section, we present a tableau-based decision procedure for our logic and establish PSPACE-completeness of satisfiability checking (the same complexity as the basic logic of belief bases in \cite{BeliefBasesAIJ}).

\begin{definition}
The tableau calculus \CalculusName extends the standard tableau calculus \LKName for the classical logic with the following two rules:

\vspace{3mm}
\begin{tabular}{cr}
$\dfrac{\{ \ltri{\agent}{k} \alpha, \neg \ltri{\agent}{k+t} \alpha \} \cup X}{\{\bot\}}$ \ltrirule & $\dfrac{\{ \neg \lbox{\group}{k} \varphi \} \cup X}{ \{ \neg \varphi \} \cup \boxDown{Y_1} \cup \ltriDown{\group}{Y_1} \;|\; \dots \;|\; \{ \neg \varphi \} \cup \boxDown{Y_N} \cup \ltriDown{\group}{Y_N}}$ \lboxrule \\ 
\end{tabular}
\[\begin{array}{lrll}
\textit{where } & \{ Y_1, \dots, Y_N\} & = \{ Y \subseteq X \mid \exists \summandf \in \partitions{\group}{k}: & \;  \lbox{\group'}{k'} \psi \in Y \Rightarrow \sum_{\agent \in \group'}{\summandf(\agent) \le k'} \textit{ and }
\\ & & & \sumbelief{\agent}{X \setminus Y} \le \summandf(\agent) \; \forall \agent \in \group \}
\\ &  \sumbelief{\agent}{X \setminus Y} & \multicolumn{2}{l}{= \sum_{\alpha \in \LangE} \maxhat \{ k' \colon \ltri{\agent}{k'} \alpha \in X \setminus Y \} }
\\ & \boxDown{Y} \cup \ltriDown{\group}{Y} & \multicolumn{2}{l}{= \{ \psi \mid \lbox{\group'}{k'} \psi \in Y \} \cup \{ \alpha \mid \ltri{\agent}{k'} \alpha \in Y, \agent \in \group \} }
\end{array}\]
\end{definition}

Each tableau rule states that the satisfiability of a set of formulas above the line (called \emph{numerator}) implies satisfiability of at least one of the sets of formulas below the line separated by the symbol `|' (called \emph{denominators}). We can use it to derive the non-satisfiability of some formula by applying the rules sequentially, with each branch ending with a set containing $\bot$ (thus unsatisfiable). Such tree-like derivations are called \emph{closed tableaux}. For our logic, it is sufficient to add just one rule for each modality (in addition to standard rules for the classical connectives). The rule \ltrirule captures the monotonicity of triangles w.r.t. grades, and the rule \lboxrule adapts the rule eliminating negative boxes to our logic: if $\{\neg \lbox{\group}{k} \varphi\} \cup X$ is satisfied in some world $w$ of some model $M$ then there should exist a world $u$ with $\dist(\group, w, u) \le k$ and a partition $\summandf \in \partitions{\group}{k}$, such that $u$ satisfies $\neg \varphi$, all boxed formulas for subgroups with smaller distance, and some selection of triangled formulas in $X$, such that the cumulative degree of the rest of triangled formulas does not exceed $k$. Note that the rule \lboxrule has potentially exponentially many denominators. At the same time, in all rules (apart from the closing ones) the denominators contain only subformulas of formulas in the numerator, and the total number of connectives and modalities in each denominator decreases w.r.t. the numerator, so the length of each branch is linear w.r.t. size of the initial formula.
The resulting calculus is sound and complete w.r.t. logic \LogicName.

\begin{theorem}
\label{th:calculus_sound_complete}
A closed tableau in \CalculusName starting from  $\{ \varphi \}$ exists iff $\varphi$ is not satisfiable in \LogicName.
\end{theorem}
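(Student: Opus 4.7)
My plan is to prove the two directions of the biconditional separately, working throughout at the level of QNGDM semantics (which coincides with \LogicName satisfiability by Section~\ref{sec:alternative}).

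For soundness (closed tableau $\Rightarrow$ unsatisfiable), I would show by induction on the tableau tree that each rule preserves satisfiability downward: if the numerator is satisfied at some world of some QNGDM, then at least one denominator is satisfied somewhere in the model. The propositional rules from \LKName and \ltrirule are routine, the latter using monotonicity of $\Dox(\agent,w)(\alpha)$ in the grade to exclude the combination $\ltri{\agent}{k}\alpha \wedge \neg\ltri{\agent}{k+t}\alpha$. The central case is \lboxrule. Suppose $\{\neg\lbox{\group}{k}\varphi\}\cup X$ holds at $w$ in $M$; pick a witness $u$ with $\dist(\group,w,u)\le k$ and $(M,u)\not\models\varphi$. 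Apply \eqref{eq:qNGDM_dist_add} to get a partition $\summandf^*\in\partitions{\group}{\dist(\group,w,u)}$ dominating the distances of all non-empty subgroups, and extend it to $\summandf\in\partitions{\group}{k}$ by absorbing the slack $k-\dist(\group,w,u)$. Take $Y$ to consist of those $\lbox{\group'}{k'}\psi\in X$ with $\sum_{\agent\in\group'}\summand{\agent}\le k'$ and $(M,u)\models\psi$, together with the triangles $\ltri{\agent}{k'}\alpha\in X$ satisfied at $u$. The two side conditions follow by combining $\summand{\agent}\ge\dist(\{\agent\},w,u)$ with \eqref{eq:qNGDM_dist_dox} applied at $u$, and the corresponding denominator is satisfied at $u$.

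For completeness I would argue contrapositively: if no closed tableau from $\{\varphi\}$ exists, then $\varphi$ is QNGDM-satisfiable. Build a fully saturated open tableau; this always terminates because denominators contain only strict subformulas, so branch length is linear in $|\varphi|$. From such an open completed tableau read off $M=\tuple{\Worlds,\dist,\Dox,\Val}$ where $\Worlds$ consists of the root and the nodes produced by each \lboxrule application, $\Dox(\agent,w)(\alpha)=\maxhat\{k'\mid\ltri{\agent}{k'}\alpha\in w\}$, $\Val(p)=\{w\mid p\in w\}$, and for each \lboxrule application at $w$ creating $u$ via partition $\summandf$ set $\dist(\group',w,u)=\sum_{\agent\in\group'}\summand{\agent}$ for $\group'\subseteq\group$ and $\plusinfty$ otherwise. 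A truth lemma by induction on $\psi$ gives $(M,w)\models\psi\Leftrightarrow\psi\in w$: the $\lbox{\group}{k}$ case is handled by saturation, since any unexpanded $\neg\lbox{\group}{k}\psi\in w$ would force its \lboxrule expansion to produce a successor refuting $\psi$ under a compatible distance. Conditions \eqref{eq:qNGDM_dist_dox} and \eqref{eq:qNGDM_dist_add} follow directly from how $\dist$ is read off from the witnessing $\summandf$.

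The principal obstacle is in the completeness direction: each \lboxrule application only directly witnesses a distance for the triggering group $\group$, yet the constructed QNGDM must expose coherent distances for every subgroup of every group. Threading the partition $\summandf$ embedded in the rule through the definition of $\dist$ is precisely what makes \eqref{eq:qNGDM_dist_add} hold at $(w,u)$, but care is needed to ensure that the $\plusinfty$ default on unrelated world pairs does not collide with forced box formulas elsewhere; this can be handled with the same kind of bookkeeping used in the proofs of Lemmas~\ref{lem:QNGDM_FMP}--\ref{lem:MAGBM_to_QNGDM}. An alternative, cleaner route would be to route completeness through Theorem~\ref{th:axioms_sound_complete}, translating each application of \lboxrule into an instance of \eqref{eq:box_box_int_axiom} in a Hilbert derivation; this avoids an explicit model construction but requires a careful syntactic translation.
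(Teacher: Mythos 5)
Your proposal follows essentially the same route as the paper: soundness is checked rule by rule (with the \lboxrule case handled exactly as intended, by extracting a witness $u$, a partition from \eqref{eq:qNGDM_dist_add} padded to total $k$, and the induced denominator), and completeness proceeds by building a QNGDM from an underivable/open tableau, gluing in a satisfying submodel for one underivable denominator of each negated box and reading the subgroup distances off the witnessing partition $\summandf$ (with $\plusinfty$ elsewhere). The only point you should make explicit is the positive half of the truth lemma for boxes, which your sketch skips in favour of the negated-box case: a formula $\lbox{\group'}{k'} \psi$ at the root is preserved only if every glued successor whose partition satisfies $\sum_{\agent \in \group'} \summand{\agent} \le k'$ actually received $\psi$ in its label, so the underivable denominator must be chosen (or shown choosable) with $Y$ containing \emph{all} box formulas compatible with its partition --- exactly what the paper's construction tacitly relies on when it claims that ``all the non-negated boxes are satisfied.''
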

\begin{proof}{\it(Sketch)}
The soundness of all rules w.r.t. QNGDM semantics is straightforward. To prove completeness we show that for each set $\Gamma$ underivable in \CalculusName (i.e. there are no closed tableaux starting from $\Gamma$) there is a pointed QNGDM model $(M, w_0)$ satisfying all the formulas in $\Gamma$. We do it by induction on the number of connectives and modalities in $\Gamma$. First, if $\Gamma$ contains formulas with double negation, conjunction or negated conjunction on the top level, they can be decomposed according to the rules and the statement follows straightforwardly from the inductive hypotheses. Otherwise $\Gamma$ has only (possibly negated) atoms, $\ltri{}{}$-formulas and $\lbox{}{}$-formulas. We start a model construction with one world $w_0$ which satisfy exactly atoms in $(\Props \cap \Gamma)$ and with doxastic function $\Dox(i, w_0)(\alpha) = \maxhat \{ k' : \ltri{\agent}{k'} \alpha \in \Gamma \}$. It is easy to check that all formulas from $\LangE \cap \Gamma$ will be satisfied in this world. To satisfy also all boxed formulas in $\Gamma$ we use the inductive hypothesis for every negated box $\neg \lbox{\group}{k} \varphi$ in $\Gamma$: the application of the rule \lboxrule to that negated box should have at least one underivable denominator (with the corresponding partition $\summandf$) and by inductive hypothesis there is a pointed model $(M', w_0')$, satisfying all formulas in that denominator. Incorporating $M'$ into $M$ and defining distance from $w_0$ to $w_0'$ on $\group$ and all its subgroups as the sum of the corresponding values of $\summandf$ (and as $\plusinfty$ for the non-subgroups) we ensure that $\lbox{\group}{k} \varphi$ is falsified in $w_0$ while all the non-negated boxes in $\Gamma$ are satisfied.
\end{proof}

Thus, to check satisfiability of a formula $\varphi$ in \LogicName we can perform an exhaustive search for the closed tableaux in \CalculusName for $\{\varphi\}$, for which polynomial space w.r.t. size of $\varphi$ is sufficient.\footnote{Notice, that during the proof search we only need to store the grades appearing in the formulas and iterate through all partitions in $\partitions{J}{k}$ when applying the rule \lboxrule, which consist of at most $|\Agents|$ numbers not exceeding $k$. Therefore, the representation of the grades inside the formulas and how we measure their size does not matter, as long as it is the same for the numbers in the iterated partitions. For simplicity, we assume that any numbers occupies only constant memory and the size of the formula is measured as the number of nodes in its tree-representation, but the same reasoning can be applied to more refined measures. }

\begin{theorem}
\label{th:complexity}
Satisfiability checking in \LogicName is PSPACE-complete.
\end{theorem}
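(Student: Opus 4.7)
The plan is to establish the two directions separately, using Theorem \ref{th:calculus_sound_complete} for the upper bound and a reduction from a known PSPACE-hard logic for the lower bound.

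For the upper bound, I would show that the tableau search in \CalculusName can be implemented by an alternating (equivalently, by Savitch, a deterministic) polynomial-space procedure. The key facts to assemble are: (i) the \emph{subformula property} of \CalculusName, so every formula occurring in any branch is, up to a top-level negation, a subformula of the initial $\varphi$; (ii) each rule application strictly reduces the total number of connectives and modalities, so branches have length polynomial (in fact linear) in $|\varphi|$; (iii) at each node of the search we only need to store the current set of formulas along the current branch and a single choice of denominator. The delicate point is the rule \lboxrule, which has potentially exponentially many denominators because each denominator is indexed by a subset $Y\subseteq X$ and a partition $\summandf \in \partitions{\group}{k}$. However, as noted in the footnote after Theorem \ref{th:calculus_sound_complete}, a partition is a tuple of at most $|\Agents|$ numbers each bounded by $k$; storing one partition at a time, together with $Y$ (encoded as a selection among the polynomially many formulas in $X$), requires only polynomial space, and the enumeration of partitions can be done in place. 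Thus an NPSPACE procedure guesses branch extensions one rule at a time, checks closure when $\bot$ appears, and accepts an underivable branch; by Savitch this yields a PSPACE algorithm for non-satisfiability, hence for satisfiability.

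For the lower bound, I would exhibit a satisfiability-preserving polynomial-time reduction from a known PSPACE-complete problem, namely satisfiability in multi-agent modal logic $\mathsf{K}_n$. The translation maps each standard modality $K_i$ to $\lbox{\{i\}}{0}$ and commutes with Boolean connectives. Using the semantics of Definition \ref{truthcond2}, the accessibility $\relstate{\{i\}}^0$ is the relation ``$S'$ satisfies every formula in agent $i$'s support'', which behaves as a normal $\mathsf{K}$ modality; one then checks that a Kripke model for $\mathsf{K}_n$ can be turned into a MAGBM (or a QNGDM) validating the translation by placing the appropriate explicit beliefs in each world's graded belief base, and conversely that the projection of any satisfying model yields a $\mathsf{K}_n$-model for the original. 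Alternatively, one can simply appeal to the fact that the ungraded belief-base logic from \cite{BeliefBasesAIJ} is already PSPACE-hard and is a syntactic fragment of \LogicName (taking all grades equal to $1$ and groups singletons), so hardness is inherited.

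The main obstacle I anticipate is the careful bookkeeping for the upper bound: one must verify that the numbers $k$ appearing in the input can be represented in polynomial space (they are part of the input, so in unary this is immediate; in binary some additional care is needed to ensure partition enumeration stays polynomial-space), and that the denominator selection in \lboxrule does not blow up the stored information. Both issues are handled by the observation that at any moment the procedure only commits to one $Y$ and one $\summandf$, and that the partition constraints $\sum_{i\in\group}\summandf(i)=k$ together with $\sumbelief{\agent}{X\setminus Y}\le\summandf(\agent)$ can be checked locally in polynomial time on the current data. The hardness side is essentially routine once the $\mathsf{K}_n$-embedding is spelled out.
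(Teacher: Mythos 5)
Your proposal is correct and takes essentially the same route as the paper: the upper bound is obtained by polynomial-space proof search in \CalculusName (storing only one subset $Y$ and one partition $\summandf$ at a time for \lboxrule, with polynomially bounded search depth), and the lower bound by inheriting PSPACE-hardness from the ungraded belief-base logic \LDA, which you list as your alternative and which is the paper's actual argument (your primary $\mathsf{K}_n$-reduction is a workable but unnecessary extra). One minor imprecision worth noting: certifying satisfiability is not a matter of guessing a single open branch, since each negated box in a saturated set must separately have some underivable denominator (the model is assembled from one submodel per negated box), so the search is an AND--OR recursion handled by depth-first search in polynomial space rather than an NPSPACE guess-plus-Savitch argument.
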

\begin{proof}{(Sketch)}
PSPACE-hardness follows from PSPACE-hardenss of logic \LDA ~\cite{InPraiseOfBeliefBases}, which is equivalent to a fragment of \LogicName. 
It can be solved in PSPACE by a proof search in \CalculusName, starting from $\{ \varphi \}$ and trying all possible rule applications. 
When trying the rule \lboxrule we iterate through all possible $Y \subseteq X$ and for each of them iterate through all partitions from $\partitions{\group}{k}$, which both can be stored using polynomial space w.r.t. $|\varphi|$, and the depth of the proof search is also bounded polynomially. 
\end{proof}



\section{Conclusion}

We have presented a proof-theoretic and complexity analysis of the notion of graded distributed belief, using a formal semantics based on graded belief bases. Following Spohn’s ranking theory \cite{Spohn1988}, we plan to study, in future work, a more general variant of the graded belief semantics based on ordinals instead of natural numbers. The notion of graded distributed belief we have defined is based on a counting view (Definition \ref{DefAlternative}). In future work, we intend to investigate a qualitative version by replacing the counting view with a qualitative perspective based on set inclusion. Last but not least, we aim to move from a static to a dynamic setting by extending our framework with a notion of graded belief base change.

\section*{Acknowledgments}

We thank the anonymous reviewers for their helpful comments and suggestions. This work is supported by the ANR project EpiRL (grant number ANR-22-CE23-0029) and European Union’s Horizon 2020 research and innovation programme under grant agreement No 101034440. 

\bibliographystyle{eptcs}
\bibliography{biblio}

\end{document}